\documentclass{article}
% \documentclass[twocolumn]{autart}
% \documentclass{IEEEtran}
% \documentclass[twocolumn, 10pt]{asme2ej}

% IMPORTANT PACKAGES<<<
\usepackage[margin=1in]{geometry}
\usepackage{amsmath, amssymb, amsthm}
\usepackage{hyperref}
\hypersetup{
    colorlinks=true,
    linkcolor=blue,
    citecolor=blue,
    urlcolor=blue,
}
\usepackage[square,numbers]{natbib}
\usepackage[caption=false,font=footnotesize]{subfig}
\usepackage{graphicx}
\usepackage{xcolor}
\usepackage{authblk}
%>>>

% CUSTOM COMMANDS<<<
\newtheorem{theorem}{Theorem}
\newtheorem{assum}{Assumption}
\newtheorem{lemma}{Lemma}

\newcommand{\SO}[1]{\ensuremath{\textrm{SO}(#1)} }
\newcommand{\SE}[1]{\ensuremath{\textrm{SE}(#1)} }
\newcommand{\so}[1]{\ensuremath{\mathfrak{so}(#1)} }

\newcommand{\dotpr}[2]{\ensuremath{\langle #1, #2 \rangle }}

\def\Pa{\ensuremath{\mathbb{\pi}_{\so{3}} }}
\def\R{\ensuremath{\mathbb{R}} }
\def\N{\ensuremath{\mathbb{N}} }

\graphicspath{{./Figures/}}
%>>>

%<<<
\title{Angular velocity and linear acceleration measurement bias estimators for the rigid body system with global exponential convergence}
\date{}
\author{Soham Shanbhag\thanks{sshanbhag@kaist.ac.kr} }
\author{Dong Eui Chang\thanks{Corresponding author, dechang@kaist.ac.kr}}
\affil{School of Electrical Engineering, Korea Advanced Institute of Science and Technology, Daejeon, Republic of Korea}
%>>>

\begin{document}

\maketitle

\begin{abstract}
    Rigid body systems usually consider measurements of the pose of the body using onboard cameras/LiDAR systems, that of linear acceleration using an accelerometer and of angular velocity using an IMU.
    However, the measurements of the linear acceleration and angular velocity are usually biased with an unknown constant or slowly varying bias.
    We propose a measurement bias estimator for such systems under assumption of boundedness of angular velocity.
    We also provide continuous estimates to the state of the system, i.e. the pose, linear velocity, and position of the body.
    These estimates are globally exponentially convergent to the state of the rigid body system.
    We propose two bias estimators designed with the estimate of the pose in the ambient Euclidean space of the Special Euclidean group and show global exponential convergence of the proposed observers to the state of the system.
    The first observer assumes knowledge of bounds of the angular velocity, while the second observer uses a Riccati observer to overcome this limitation.
    We show the convergence with an example of a rigid body rotation and translation system on the special Euclidean group.
    We show that the observer is able to estimate the bias using data collected from an Intel Realsense camera.
\end{abstract}

\section{Introduction}%<<<
Research in quadcopter design and control has been increasing due to their use in military and civilian applications\cite{BashiHASW2017}.
For effective use of these systems, the study of control of these systems is paramount.
However, for feedback based control, one important design consideration is the integration of sensors in those systems for providing the state of the system.
Sensors proposed for this quadcopter system usually measure the position of various landmarks in the environment, given by the equation
\begin{align*}
    y_i = R^T(p_i - p),
\end{align*}
where $R$ is the matrix denoting the rotation of the quadcopter frame with respect to the ground frame, $p$ is the position of the quadcopter in the ground frame, and $p_i$s and $y_i$s are the position of the landmarks in the ground and quadcopter frame, respectively.
Moreover, measurements concerning angular velocity are collected using an IMU and linear velocity measurements are collected using GPS sensors or Doppler velocity sensors\cite{BrasISSO2016,VascoCSO2010,VascoCSO2007}.
These problems are known as ``Attitude and Heading Reference System" problems in the literature\cite{Farre2008}.
However, the measurement of linear velocity is difficult if velocity measurements are not available, like in GPS denied environments.
Alternatively, accelerometer measurements have been also exploited as an alternative to linear velocity measurements due to their ease of use.

Usually, IMU and accelerometer sensor measurements are corrupted with a constant bias.
Deterministic observers for state and bias estimation have been designed to estimate this bias in an online fashion.
The authors in \cite{Hashi2021} design an almost semi-globally uniformly ultimately bounded observer for this system.
The authors in \cite{deMaHHS2020} and \cite{HuaA2018} propose locally exponentially stable observers using the continuous Riccati equation.
The authors in \cite{BarraB2017} propose an asymptotically stable observer using invariant extended Kalman filters.
The authors in \cite{BatisSO2011} propose a globally convergent observer for the bias in the accelerometer readings, but assume unbiased angular velocity measurements.

As can be seen, the observers proposed are unable to achieve global convergence in the presence of IMU and accelerometer bias.
This is due to a topological obstruction on $\SO{3}$ not allowing global observers\cite{BhatB1998}.
Two methods are used to overcome this obstruction.
The authors in \cite{WangT2020a} design a globally exponentially stable observer for the rigid body system.
They design a hybrid (continuous + discrete) observer for the rigid body system with the state consisting of the rotational pose, translation pose, and the linear velocity.
To the best of our knowledge, this is the first observer proposing global exponential stability.
However, the hybrid nature of the observer leads to discontinuities in the observer dynamics, specifically the pose.
Since the estimate of the pose is used in classical control methods, a discontinuity in the estimate of the pose is undesirable.
Another alternative is the extension of the dynamics to the ambient Euclidean space.
This approach has been used to solve similar problems, for example in \cite{Chang2021,ParkPKC2021}.
We use this approach to design globally convergent continuous observers for the rigid body system.
An observer using a similar approach has been designed for the system considering linear velocity measurements instead of linear acceleration measurements in \cite{ShanbC2022a}.

The ambient space extension allows us to design deterministic observers which are continuous and globally convergent.
We propose two observers for this purpose.
The first observer assumes knowledge of bounds of the angular velocity and proposes a constant gain observer.
The second observer is a Riccati based variable gain observer, which is computationally expensive but assumes only the existence of a bound on the angular velocity, and no knowledge of the said bound.
To our knowledge, these are the first observers able to provide global exponential convergence without discontinuities.

%>>>

\section{Preliminaries}%<<<
We denote the estimate of the state $A$ by $\bar{A}$.
Denote by $\SO{3}$ the set of orthogonal matrices of dimension 3 with determinant 1, and by $\so{3}$ the corresponding Lie algebra.
The elements of $\so{3}$ are skew symmetric matrices in $\R^{3 \times 3}$.
The matrix representation of the cross product with a vector $v$ is denoted by $v_\times: \R^3 \to \so{3}$ such that for all $w \in \R^3, v \times w = v_\times w$.
The inverse of the $(\cdot)_\times$ operator is denoted by $(\cdot)_\vee: \so{3} \to \R^3$, where for all $a \in \R^3, (a_\times)_\vee = a$ and for all $A \in \so{3}, (A_\vee)_\times = A$.
The Euclidean inner product of two matrices in $\R^{m \times n}$ is denoted by $\dotpr{\cdot}{\cdot}: \R^{m \times n} \times \R^{m \times n} \to \R$ such that $\dotpr{A}{B} = \textrm{trace}(A^T B)$.
The Euclidean norm of a matrix $A \in \R^{m \times n}$ is defined as $\| A \| = \sqrt{\langle A, A \rangle}$.
The orthogonal projection of $A \in \R^{3 \times 3}$ to $\so{3}$ is given by $\Pa(A) = (A - A^T)/2$.
Let $A \otimes B$ denote the Kronecker product of $A$ and $B$ in $\R^{n \times n}$.
The matrix consisting of all zeros in $\R^{m \times n}$ is denoted by $0_{m \times n}$. Sometimes, we write $0_m$ to mean $0_{m\times 1}$ for the sake of compactness.
The identity matrix in $\R^{n \times n}$ is denoted by $I_n$.
We use the property that for any $C \in \R^{m \times n}, C C^T \geq 0$.
We denote the smallest and largest eigenvalues of a matrix $A$ by $\lambda_{min}(A)$ and $\lambda_{max}(A)$, respectively.
All vectors are considered as column vectors.

Consider the system evolving on $\SO{3} \times \R^3 \times \R^3 \subset \R^{3 \times 3} \times \R^3 \times \R^3$ as
\begin{subequations}\label{sys_main}
\begin{align}
    \dot{R} &= R\Omega_\times,\\
    \dot{p} &= v,\\
    \dot{v} &= g + Ra.
\end{align}
\end{subequations}
Measurements of $\Omega$ and $a$ are available with a constant additive bias in the form $\Omega_m = \Omega + b_\Omega$ and $a_m = a + b_a$, respectively.
This additive bias is a random turn-on bias, present due to thermal, physical, mechanical, and electrical properties of the sensor, and varies every time the sensors are started.
In this paper, we assume the measurements of $R$ and $p$ being available.
Usually, measurements are available as a function of $R$ and $p$.
In this case, the measurements may be constructed using available measurements.
Note that we do not assume availability of measurement of the linear velocity, $v$, of the system.

The following is assumed about the angular velocity of the system:
\begin{assum}\label{assum_ang_vel}
    The angular velocity of the system $\Omega$ is bounded.
\end{assum}

We assume measurements are available continuously.
Since we design deterministic observers, the measurements of the states are assumed to have no noise.
In practice, the measurements of the states will be noisy.
Also, the biases considered may contain noise which can be modelled as a Gauss-Markov process.
However, as numerical and experimental simulations show, the proposed observers perform satisfactorily in the presence of some noise in the measurements.

%>>>

\section{Proposed observers}%<<<
We present two observers in this section.
The first observer assumes knowledge of the bounds of $\Omega$, and provides a constant gain observer for the system.
The second observer assumes no such knowledge of the bounds, but only that a bound on $\Omega$ exists.
However, it uses a Riccati equation based observer, which is computationally expensive as compared to the first observer.

\subsection{Constant Gain Observer}%<<<
Choose the observer equations as:
\begin{subequations}\label{sys_obs_const}
    \begin{align}
        \dot{\bar{R}} &= R(\Omega_m - \bar{b}_\Omega)_\times + k_1 (R - \bar{R}),\\
        \dot{\bar{b}}_\Omega &= k_2 \Pa(R^T \bar{R})_\vee,\\
        \dot{\bar{p}} &= \bar{v} + k_3 (p - \bar{p}),\\
        \dot{\bar{v}} &= g + R (a_m - \bar{b}_a) + k_4 (p - \bar{p}),\\
        \dot{\bar{b}}_a &= -k_5 R^T (p - \bar{p}),
    \end{align}
\end{subequations}
where $(\bar{R}, \bar{b}_\Omega, \bar{p}, \bar{v}, \bar{b}_a) \in \R^{3 \times 3} \times \R^3 \times \R^3 \times \R^3 \times \R^3$, and $k_1, k_2, k_3, k_4, k_5 \in \R$.

\begin{lemma}%<<<
    Given $c > 0$, there exist $k_3, k_4, k_5 \in \R$ which satisfy the inequalities
    \begin{align}\label{eq_k3k4k5_conditions}
        \begin{aligned}
            Y := \begin{bmatrix}
                \substack{2 k_3^2 - 2 k_4\\ - k_5^2} & k_3 k_4 - k_3 k_5^2 & -k_3 k_5\\
                k_3 k_4 - k_3 k_5^2 & \substack{2 k_4^2 - 2 k_3 k_5\\ - k_3^2 k_5^2} & -k_4 k_5\\
                -k_3 k_5 & -k_4 k_5 & 2 k_5^2 - c^2
            \end{bmatrix} > 0,\quad
            Z := \begin{bmatrix}
                k_3 & k_4 & -k_5\\
                k_4 & (k_3 k_4 - k_5) & -k_3 k_5\\
                -k_5 & -k_3 k_5 & k_4 k_5
            \end{bmatrix} > 0.
        \end{aligned}
    \end{align}
\end{lemma}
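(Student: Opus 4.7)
The plan is to reduce both positive-definiteness conditions to scalar polynomial inequalities via Sylvester's criterion, and then exhibit a scaling of $(k_3, k_4, k_5)$ under which every leading principal minor of $Y$ and of $Z$ has a strictly positive leading coefficient in one of the parameters. Since $Y_{33} = 2k_5^2 - c^2$ forces a strictly positive lower bound on $k_5^2$, I would fix $k_5$ as a constant with $k_5 > c$ from the outset. All remaining entries of $Y$ and $Z$ couple $k_3$ and $k_4$, so the natural next step is to try the one-parameter family $k_4 = \alpha k_3^2$ with $\alpha \in (0,1)$ to be chosen, and to treat $k_3$ as a large free parameter.

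Under this substitution each entry of $Y$ and $Z$ is a polynomial in $k_3$ with a clean leading monomial, and so is every principal minor; it then suffices to show that each leading $k_3$-coefficient is positive, because any sub-dominant corrections can be absorbed by taking $k_3$ sufficiently large. The crucial step is the leading coefficient of $\det(Y)$: expanding the $3\times 3$ determinant along the first row and collecting the $k_3^6$ terms yields, after cancellation, an expression of the form $\alpha^2\bigl[(3-4\alpha)(2k_5^2 - c^2) - 2(1-\alpha)k_5^2\bigr]$. Choosing $\alpha = 1/2$ collapses this to $(k_5^2 - c^2)/4$, which is positive precisely because we fixed $k_5 > c$. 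The companion checks are then routine: the $2\times 2$ leading minor of $Y$ scales as $k_3^6/4$, while for $Z$ a parallel expansion gives $\det(Z) \sim (k_5/8)\,k_3^6$, its $2\times 2$ minor scales as $k_3^4/4$, and $Z_{11} = k_3 > 0$.

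I expect the main obstacle to be purely algebraic, namely identifying the scaling that prevents the $O(k_3^6)$ contributions to $\det(Y)$ from cancelling into a nonpositive residual; before the substitution $k_4 = \alpha k_3^2$ the sign structure of the determinant is not transparent, and a naive scaling such as $k_4 \sim k_3$ or $k_4 \sim k_3^2$ with arbitrary coefficient produces a negative leading term. Once the combination $k_4 = k_3^2/2$ together with any $k_5 > c$ is fixed, the remaining six Sylvester inequalities reduce to dominant-term comparisons that hold for every $k_3$ larger than an explicit threshold $K_0 = K_0(c)$, and the lemma follows by choosing any such triple $(k_3, k_4, k_5)$.
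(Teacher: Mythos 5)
Your proof is correct, but it takes a genuinely different route from the paper's. For $Y$ the paper does not invoke Sylvester's criterion at all: it writes $Y = vv^T + ww^T + \mathrm{diag}(k_3^2 - 2k_4 - 2k_5^2,\; k_4^2 - 2k_3k_5 - 2k_3^2k_5^2,\; k_5^2 - c^2)$ with $v = (k_3, k_4, -k_5)$ and $w = (k_5, -k_3k_5, 0)$, so that positivity of the three diagonal entries suffices; Sylvester is used only for $Z$. The resulting six scalar inequalities define an open set $\mathbb{K}(c)$, the paper exhibits the explicit point $(10,40,2) \in \mathbb{K}(1)$, and then treats general $c$ by the homogeneous rescaling $(k_3,k_4,k_5) = (ck_3', c^2k_4', ck_5')$ --- note that this puts $k_4$ at the order of $k_3^2$, the same scaling you discovered. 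Your argument instead applies Sylvester's criterion to both matrices and checks the leading $k_3$-coefficients of all six leading principal minors along the curve $k_4 = k_3^2/2$ with $k_5 > c$ fixed; I verified your key computations ($\det Y \sim (k_5^2 - c^2)k_3^6/4$, the $2\times 2$ minor of $Y$ $\sim k_3^6/4$, $\det Z \sim k_5 k_3^6/8$, the $2\times 2$ minor of $Z$ $\sim k_3^4/4$) and they are all correct, so the finitely many polynomials are simultaneously positive past a threshold $K_0(c)$. What the paper's decomposition buys is an explicit open region of admissible gains described by simple inequalities (convenient for tuning, and used to assert infinitely many solutions); what your asymptotic Sylvester argument buys is an exact positive-definiteness test with no conservatism from, and no need to guess, the rank-one splitting, at the cost of producing gains only for $k_3$ beyond an unquantified threshold.
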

\begin{proof}
    Note that $Y$ can be simplified as
    \begin{align*}
        Y =
        \begin{bmatrix}
            k_3 \\ k_4 \\ -k_5
        \end{bmatrix}
        \begin{bmatrix}
            k_3 & k_4 & -k_5
        \end{bmatrix} +
        \begin{bmatrix}
            k_5 \\ - k_3 k_5 \\ 0
        \end{bmatrix}
        \begin{bmatrix}
            k_5 & - k_3 k_5 & 0
        \end{bmatrix} +
        \begin{bmatrix}
            k_3^2 - 2 k_4 - 2 k_5^2 & 0 & 0\\
            0 & k_4^2 - 2 k_3 k_5 - 2 k_3^2 k_5^2 & 0\\
            0 & 0 & k_5^2 - c^2
        \end{bmatrix}.
    \end{align*}
    If $k_3, k_4$ and $k_5$ are chosen such that $k_5 > c$, $k_3^2 - 2 k_4 - 2 k_5^2 > 0$, and $k_4^2 - 2 k_3 k_5 - 2 k_3^2 k_5^2 > 0$, then $Y > 0$.

    The principal minors of $Z$ can be written as $k_3$, $k_3^2 k_4 - k_3 k_5 - k_4^2$, and $k_5 (k_3^2 k_4^2 - k_3^3 k_5 - k_4^3 + k_5^2)$.
    To ensure that $Z$ is positive definite, it suffices to choose $k_3, k_4$ and $k_5$ such that $k_3 > 0, k_3^2 k_4 - k_3 k_5 - k_4^2 > 0,$ and $k_3^2 k_4^2 - k_3^3 k_5 - k_4^3 > 0$ if $k_5 > 0$.

    We define the set $\mathbb{K}(c)$ containing the permissible values of $k_3, k_4$ and $k_5$ as
    \begin{align}
        \mathbb{K}(c) &= \left\{(k_3, k_4, k_5) \mid
            k_5 > c, k_3 > 0,
            k_4^2 - 2 k_3 k_5 - 2 k_3^2 k_5^2 > 0,
            k_3^2 k_4 - k_3 k_5 - k_4^2 > 0,\right.\nonumber\\
            &\left.k_3^2 k_4^2 - k_3^3 k_5 - k_4^3 > 0,
            k_3^2 - 2 k_4 - 2 k_5^2 > 0\right.\}
    \end{align}
    The set $\mathbb{K}(c)$ is an open set.
    Note that these conditions are one example of the possible conditions on $k_3, k_4$ and $k_5$ for positive definiteness of $Y$ and $Z$, and hence these conditions do not define an exhaustive set.

    If $c \leq 1$, $(k_3, k_4, k_5) = (10, 40, 2) \in \mathbb{K}(c)$.
    Since $\mathbb{K}(c)$ is an open set, there are infinitely many values in the neighbourhood of $(10, 40, 2)$ which also belong to $\mathbb{K}(c)$.
    Consider $(k_3', k_4', k_5') \in \mathbb{K}(1)$.
    Let $c > 1$, and choose $k_3 = c k_3', k_4 = c^2 k_4'$ and $k_5 = c k_5'$.
    Then,
    \begin{align*}
            k_5 = c k_5' > c,\quad
            k_3 = c k_3' > 0,\quad
            k_3^2 - 2 k_4 - 2 k_5^2 = c^2 k_3'^2 - 2 c^2 k_4' - 2 c^2 k_5'^2 > 0,\\
            k_3^2 k_4 - k_3 k_5 - k_4^2 = c^4 k_3'^2 k_4' - c^2 k_3' k_5' - c^4 k_4'^2 > c^4 k_3'^2 k_4' - c^4 k_3' k_5' - c^4 k_4'^2 > 0,\\
            k_3^2 k_4^2 - k_3^3 k_5 - k_4^3 = c^6 k_3'^2 k_4'^2 - c^4 k_3'^3 k_5' - c^6 k_4'^3 > c^6 k_3'^2 k_4'^2 - c^6 k_3'^3 k_5' - c^6 k_4'^3 > 0,\\
            k_4^2 - 2 k_3 k_5 - 2 k_3^2 k_5^2 = c^4 k_4'^2 - 2 c^2 k_3' k_5' - 2 c^4 k_3'^2 k_5'^2 > c^4 k_4'^2 - 2 c^4 k_3' k_5' - 2 c^4 k_3'^2 k_5'^2 > 0.
    \end{align*}
    Hence, $(k_3, k_4, k_5) = (c k_3', c^2 k_4', c k_5') \in \mathbb{K}(c)$, and hence satisfy inequalities \eqref{eq_k3k4k5_conditions}.
    Since $\mathbb{K}(c)$ is an open set, there exist infinite solutions in the neighbourhood of $(c k_3', c^2 k_4', c k_5')$ which satisfy inequalities \eqref{eq_k3k4k5_conditions}.
    This concludes the proof.
\end{proof}
%>>>

Define the observer error terms
\begin{subequations}\label{eq_obs_errors}
\begin{align}
    E_R = R - \bar{R}, \quad e_p &= p - \bar{p}, \quad e_v = v - \bar{v}, \\ e_\Omega = b_\Omega - \bar{b}_\Omega&, \quad e_a = b_a - \bar{b}_a.
\end{align}
\end{subequations}

\begin{theorem}%<<<
    \label{theo_const}
    Choose $k_1, k_2 > 0$ and $k_3, k_4, k_5$ satisfying inequalities \eqref{eq_k3k4k5_conditions} with $c = \sup_{t \geq 0} \| \Omega(t) \|$.
    Then the proposed observer \eqref{sys_obs_const} converges exponentially fast to the system \eqref{sys_main} under Assumption \ref{assum_ang_vel} for all $(\bar{R}(0), \bar{b}_\Omega(0), \bar{p}(0), \bar{v}(0), \bar{b}_a(0)) \in \R^{3 \times 3} \times \R^3 \times \R^3 \times \R^3 \times \R^3$.
\end{theorem}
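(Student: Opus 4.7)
The plan is to compute the error dynamics from \eqref{sys_main} and \eqref{sys_obs_const}, to observe that they split into a self-contained rotational block in $(E_R, e_\Omega)$ and a translational block in $(e_p, e_v, e_a)$, and to build a strict quadratic Lyapunov function for each block. Using $\Omega_m = \Omega + b_\Omega$, $a_m = a + b_a$, $R^T\bar R = I - R^T E_R$, and the fact that $b_\Omega, b_a$ are constant, one obtains
\begin{align*}
    \dot E_R &= -k_1 E_R - R(e_\Omega)_\times, \\
    \dot e_\Omega &= k_2 \Pa(R^T E_R)_\vee, \\
    \dot e_p &= -k_3 e_p + e_v, \\
    \dot e_v &= -k_4 e_p - R e_a, \\
    \dot e_a &= k_5 R^T e_p.
\end{align*}
The rotational block is independent of the translational errors and can be handled first.

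For the rotational block I would introduce $\tilde R := R^T E_R$, which satisfies $\dot{\tilde R} = -k_1 \tilde R - \Omega_\times \tilde R - (e_\Omega)_\times$; the $\Omega_\times \tilde R$ drift contributes nothing to $\tfrac{d}{dt}\|\tilde R\|^2$ by skew-symmetry. A Lyapunov candidate $V_R = \tfrac{k_2}{2}\|\tilde R\|^2 + \|e_\Omega\|^2 + \mu\, e_\Omega^T \Pa(\tilde R)_\vee$ is positive definite for $\mu>0$ small. Its quadratic-only part yields $\dot V_R = -k_1 k_2 \|\tilde R\|^2$, which is only semidefinite; differentiating the cross term supplies the missing $-\mu\|e_\Omega\|^2$ together with residuals (including a term in $\mu\, e_\Omega^T \Pa(\Omega_\times \tilde R)_\vee$) that are bounded using $\|\Omega\|\le c$ and Young's inequality. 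Taking $\mu$ small then yields $\dot V_R \le -\alpha_R V_R$ for some $\alpha_R>0$, hence exponential decay of $(E_R, e_\Omega)$.

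For the translational block I would change variables to $\tilde e_a := R e_a$, turning the dynamics into
\begin{align*}
    \dot e_p &= -k_3 e_p + e_v, \\
    \dot e_v &= -k_4 e_p - \tilde e_a, \\
    \dot{\tilde e}_a &= k_5 e_p + (R\Omega)_\times \tilde e_a,
\end{align*}
a linear time-varying system in which the only $t$-dependence lives in the skew drift on $\tilde e_a$. I would take $V_T = \xi^T (Z \otimes I_3) \xi$ with $\xi = (e_p^T, e_v^T, \tilde e_a^T)^T$, positive definite by $Z>0$ from Lemma~1. The derivative splits into a constant-drift piece $\xi^T((ZM + M^T Z)\otimes I_3)\xi$, with $M$ the $3\times 3$ drift matrix in $(e_p, e_v, \tilde e_a)$, and a skew piece $2Z_{13}\, e_p^T(R\Omega)_\times \tilde e_a + 2Z_{23}\, e_v^T(R\Omega)_\times \tilde e_a$, since the diagonal $\tilde e_a^T (R\Omega)_\times \tilde e_a$ vanishes. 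Absorbing the two skew cross terms via $\|R\Omega\|\le c$ and Young's inequality deposits the $c^2$ precisely in the $(3,3)$ slot, yielding $\dot V_T \le -\eta^T Y \eta$ for the norm vector $\eta = (\|e_p\|,\|e_v\|,\|\tilde e_a\|)^T$. Since $Y>0$ from Lemma~1, this gives $\dot V_T \le -\alpha_T V_T$ for some $\alpha_T>0$, and together with the rotational estimate establishes global exponential convergence of the full observer on $\R^{3\times 3}\times (\R^3)^4$.

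The main obstacle I expect is purely algebraic: checking that the entries of $Z$ in Lemma~1 are precisely those for which $-(ZM + M^T Z)$ plus the Young-bound contributions of the two skew cross terms equals $Y$. The specific shape of the matrices in Lemma~1---with $c^2$ appearing only in $Y_{33}$ and the cofactor structure of $Z$---should emerge from carrying out this matching in full.
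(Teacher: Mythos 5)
Your error dynamics and the two-block decomposition match the paper, and your handling of the rotational block is a legitimate (arguably cleaner) alternative: the paper only obtains $\dot V_1=-k_1k_2\|E_R\|^2$, then invokes Barbalat's lemma and a linear-systems theorem, whereas your cross term $\mu\,e_\Omega^T\Pa(\tilde R)_\vee$ strictifies the Lyapunov function and yields the exponential rate directly. The translational block, however, contains a genuine error, and it sits exactly at the step you deferred as ``purely algebraic.''

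With $V_T=\xi^T(Z\otimes I_3)\xi$ the drift contribution is $\xi^T\bigl((M^TZ+ZM)\otimes I_3\bigr)\xi$ with
\begin{align*}
M=\begin{bmatrix}-k_3&1&0\\-k_4&0&-1\\k_5&0&0\end{bmatrix},
\end{align*}
and a direct computation with the $Z$ of Lemma~1 gives $(M^TZ+ZM)_{22}=2k_4>0$ and $(M^TZ+ZM)_{33}=2k_3k_5>0$ for every admissible gain triple. In particular, at $\xi=(0,e_v,0)$ the skew piece vanishes and $\dot V_T=2k_4\|e_v\|^2>0$, so the claimed bound $\dot V_T\le-\eta^TY\eta$ cannot hold: your Lyapunov function increases along some trajectories. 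The matrix $Z$ of Lemma~1 is constructed to satisfy the \emph{dual} identity $MZ+ZM^T=-\bigl(Y+vv^T+\textrm{diag}(0,0,c^2)\bigr)$ with $v=(k_5,\,k_3k_5,\,0)^T$ (this is where the rank-one structure and the $c^2$ in the $(3,3)$ entry of $Y$ come from), i.e.\ $Z$ plays the role of a covariance-type matrix, not a Lyapunov matrix. The repair is to use $V_T=\xi^T(Z^{-1}\otimes I_3)\xi$, which in the original variables is precisely the paper's $V_2=x^TP^{-1}x$ with $P$ the conjugate of $Z\otimes I_3$ by $\textrm{blkdiag}(I_6,R)$; one then gets $\dot V_2=-x^TP^{-1}(C^TC+D+Q)P^{-1}x$ with $Q$ the conjugate of $Y\otimes I_3$, and must also carry the $-\dot P$ term coming from the $R$-dependence of $P$ (the counterpart of your $(R\Omega)_\times\tilde e_a$ drift). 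Until you swap $Z$ for $Z^{-1}$ and redo the matching in that dual form, the translational part of your argument does not go through.
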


\begin{proof}
    Differentiating the errors along the system \eqref{sys_main} and \eqref{sys_obs_const}, the error system is
    \begin{subequations}\label{sys_err_const}
    \begin{align}
        \dot{E}_R &= -R e_{\Omega_\times} - k_1 E_R,\label{sys_err_Er_const}\\
        \dot{e}_\Omega &= k_2 \Pa(R^T E_R)_\vee,\\
        \dot{e}_p &= e_v - k_3 e_p,\\
        \dot{e}_v &= -R e_a - k_4 e_p,\\
        \dot{e}_a &= k_5 R^T e_p.
    \end{align}
    \end{subequations}

    Define $x = (e_p, e_v, e_a)$.
    Then the update equation for $x$ can be written as
    \begin{align*}
        \dot{x} = \begin{bmatrix}
            -k_3 I_3 & I_3 & 0_{3 \times 3}\\
            -k_4 I_3 & 0_{3 \times 3} & -R\\
            k_5 R^T & 0_{3 \times 3} & 0_{3 \times 3}
        \end{bmatrix} x = A(t) x.
    \end{align*}

    Since the dynamics of $E_R$-$e_\Omega$ system are independent of the $x$ system, consider the Lyapunov functions
    \begin{align*}
        V_1 &= \frac{k_2}{2} \dotpr{E_R}{E_R} + \dotpr{e_\Omega}{e_\Omega},\\
        V_2 &= x^T P^{-1} x,
    \end{align*}
    where $P$ is defined as
    \begin{align*}
        P = \begin{bmatrix}
            k_3 I_3 & k_4 I_3 & -k_5 R\\
            k_4 I_3 & (k_3 k_4 - k_5) I_3 & -k_3 k_5 R\\
            -k_5 R^T & -k_3 k_5 R^T & k_4 k_5 I_3
        \end{bmatrix}.
    \end{align*}

    Differentiating $V_1$ along system \eqref{sys_err_const}, we get
    \begin{align*}
        \dot{V}_1 &= k_2 \dotpr{E_R}{-R e_{\Omega_\times} - k_1 E_R} + 2\dotpr{e_\Omega}{k_2 \Pa(R^T E_R)_\vee} = - k_1 k_2 \dotpr{E_R}{E_R},
    \end{align*}
    which is negative semi-definite.
    Hence, $E_R$ and $e_\Omega$ are bounded since $V_1$ is bounded.
    Moreover, since $V_1$ is bounded below by $0$ and non-increasing, $V_1(t)$ has a finite limit as $t \to \infty$.
    Differentiating $\dot{V}_1$ along system \eqref{sys_err_const}, we have $\ddot{V}_1 = - k_1 k_2 \dotpr{E_R}{-R e_{\Omega_\times} - k_1 E_R}$ which is bounded since $E_R$ and $e_\Omega$ are bounded, $R \in \SO{3}$ is bounded, and $\Omega$ is bounded due to Assumption \ref{assum_ang_vel}.
    Hence, $\dot{V}_1$ is uniformly continuous.
    By Barbalat's lemma, since $V_1(t)$ has a finite limit as $t \to \infty$, $\dot{V}_1(t) \to 0$ as $t \to \infty$.
    Hence, $\lim_{t \to \infty} E_R(t) = 0$.

    Differentiating equation \eqref{sys_err_Er_const}, we get
    \begin{align*}
        \ddot{E}_R &= k_1^2 E_R + k_1 R e_{\Omega_\times} - R \Omega_\times e_{\Omega_\times} - k_3 R \Pa(R^T E_R),
    \end{align*}
    which is bounded since $E_R, R$ and $e_\Omega$ are bounded, and $\Omega$ is bounded by Assumption \ref{assum_ang_vel}.
    Hence, $\dot{E}_R$ is uniformly continuous. Using Barbalat's Lemma, since $\lim_{t \to \infty} E_R(t) = 0$, $\lim_{t \to \infty} \dot{E}_R(t) = 0$.
    Hence, $e_\Omega(t) = -(R(t)^T (\dot{E}_R(t) + k_1 E_R(t)))_\vee \to 0$ as $t \to \infty$.
    Since the system under consideration is linear, the error system is exponentially stable (Theorem 4.11 \cite{Khali2002}).

    Since
    \begin{align*}
        P =
        \begin{bmatrix}
            I_6 & 0_{6 \times 3}\\ 0_{3 \times 6} & R^T
        \end{bmatrix}
        (Z \otimes I_3)
        \begin{bmatrix}
            I_6 & 0_{6 \times 3}\\ 0_{3 \times 6} & R
        \end{bmatrix},
    \end{align*}
    where $Z$ is as defined in inequalities \eqref{eq_k3k4k5_conditions}, $P$ and $Z$ have the same eigenvalues.
    Consequently, defining
    \begin{align}
        W_1(x) = \frac{\| x \|^2}{\lambda_{max}(Z)},\label{eq_W1_const}\\
        W_2(x) = \frac{\| x \|^2}{\lambda_{min}(Z)},\label{eq_W2_const}
    \end{align}
    we have that $ W_1(x) \leq V_2(t, x) \leq W_2(x) $ since $\lambda_{min}(Z) > 0$.
    Using $c = \sup_{t \geq 0}{\|\Omega(t)\|}$, define the matrices $C, D$ and $Q$ as
    \begin{align*}
        C = \begin{bmatrix} k_5 I_3 & k_3 k_5 I_3 & - R \Omega_\times \end{bmatrix},\;
        D = \begin{bmatrix}
        0_{6 \times 6} & 0_{6 \times 3} \\
        0_{3 \times 6} & c^2 I_3 - \Omega_\times^T \Omega_\times
        \end{bmatrix}, \textrm{ and }
        Q = \begin{bmatrix}
            I_6 & 0_{6 \times 3}\\ 0_{3 \times 6} & R^T
        \end{bmatrix}
        (Y \otimes I_3)
        \begin{bmatrix}
            I_6 & 0_{6 \times 3}\\ 0_{3 \times 6} & R
        \end{bmatrix},
    \end{align*}
    where $Y$ is as defined in inequalities \eqref{eq_k3k4k5_conditions}.
    Differentiating $V_2$ along system \eqref{sys_err_const},
    \begin{align*}
        \dot{V}_2 &= x^T (A(t)^T P^{-1} + P^{-1} A(t) - P^{-1} \dot{P} P^{-1}) x = x^T P^{-1} (A(t) P + P A^T(t) - \dot{P}) P^{-1} x,\\
        &= - x^T P^{-1} \left( C^T C + D + Q \right) P^{-1} x.
    \end{align*}
    Note that $C^T C \geq 0$ for all $C \in \R^{3 \times 9}$, $D \geq 0$ since $c^2 \geq \lambda_{max}(\Omega_\times^T \Omega_\times)$, and $Q > 0$ since $Y > 0$.
    Hence, $\dot{V}_2(x)$ is negative definite.

    Note that
    \begin{align*}
        P^{-1} Q P^{-1} = \begin{bmatrix}
            I_6 & 0_{6 \times 3}\\ 0_{3 \times 6} & R^T
        \end{bmatrix}
        (Z^{-1} Y Z^{-1} \otimes I_3)
        \begin{bmatrix}
            I_6 & 0_{6 \times 3}\\ 0_{3 \times 6} & R
        \end{bmatrix},
    \end{align*}
    where we have used the property $(A \otimes B) (C \otimes D) = AC \otimes BD$ and $(A \otimes B)^{-1} = A^{-1} \otimes B^{-1}$.
    Hence, the eigenvalues of $P^{-1} Q P^{-1}$ are the same as the eigenvalues of $Z^{-1} Y Z^{-1}$.
    Define
    \begin{align*}
        W_3(x) = \lambda_{min}(Z^{-1} Y Z^{-1})\| x \|^2,
    \end{align*}
    such that $W_3(x) \leq - \dot{V}_2(t, x)$, i.e. $\dot{V}_2(t, x) \leq - W_3(x)$.
    From equations \eqref{eq_W1_const} and \eqref{eq_W2_const} we have $W_1(x)$ and $W_2(x)$ such that $W_1(x) \leq V_2(t, x) \leq W_2(x)$.
    From Theorem 4.10 \cite{Khali2002}, the error system is exponentially stable.
    Hence, global exponential convergence of the constant gain observer to the state is shown.

    % Since the system under consideration is linear, asymptotic convergence implies exponential convergence of the system(Theorem 4.11 \cite{Khali2002}).
\end{proof}
%>>>

As can be seen from the required conditions, this observer requires the knowledge of the value of the bound $\sup_t\|\Omega(t)\|$.
Since this bound may not always be available, we propose the following variable gain observer.
%>>>

\subsection{Variable Gain Observer}%<<<
We first consider the following linear system
\begin{align}\label{sys_lin}
    \begin{aligned}
        \dot{x}(t) &= A(t) x(t),\\
        y(t) &= C x(t),
    \end{aligned}
\end{align}
where $A \in \R^{n \times n}$ and $C \in \R^{m \times n}$.

\begin{lemma}\label{lem_uco}
    The bounded system \eqref{sys_lin} is uniformly completely observable if there exists a $\mu > 0$ and $p \in \N$ such that for all $t$, $M(t) M^T(t) \geq \mu I_n > 0$, where $M(t)$ is defined as $M(t) = \begin{bmatrix} M_0(t) & \ldots & M_p(t) \end{bmatrix}$ with $M_0(t) = C^T(t)$ and $M_{i+1}(t) = \dot{M}_i(t) + A^T(t) M_i(t)$.
\end{lemma}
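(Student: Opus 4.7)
The plan is to show that the pointwise condition $M(t)M^T(t) \geq \mu I_n$ implies the standard uniform complete observability bound on the observability Gramian $W(t, t+\delta) := \int_0^\delta \Phi(t+\tau, t)^T C^T C\, \Phi(t+\tau, t)\, d\tau$, namely $\alpha_1 I_n \leq W(t,t+\delta) \leq \alpha_2 I_n$ for some $\alpha_1, \alpha_2, \delta > 0$ independent of $t$, where $\Phi$ is the state-transition matrix of \eqref{sys_lin}.

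First, I would prove by induction on $i$ that $y^{(i)}(t) = M_i^T(t)\, x(t)$ along any trajectory. The base case is $y = Cx = M_0^T x$, and the step follows from $\frac{d}{dt}(M_i^T x) = \dot{M}_i^T x + M_i^T A x = (\dot{M}_i + A^T M_i)^T x = M_{i+1}^T x$, which exactly matches the defining recursion for the $M_i$.

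Next, fix $t_0 \geq 0$ and a unit vector $\xi \in \R^n$, and take $x(t_0) = \xi$. Taylor-expanding $y(t_0 + \tau)$ to order $p$ in $\tau$ gives $y(t_0+\tau) = P(\tau) + R(\tau)$ with polynomial part $P(\tau) := \sum_{i=0}^p \frac{\tau^i}{i!}\, M_i^T(t_0)\,\xi$ and integral remainder $R(\tau)$. Because the only degree-$p$ polynomial that vanishes identically on $[0,\delta]$ is zero, the map $(a_0,\ldots,a_p) \mapsto \int_0^\delta \|\sum_i a_i \tau^i/i!\|^2\, d\tau$ is a positive-definite quadratic form on $\R^{(p+1)m}$, so there exists $\gamma(\delta,p)>0$ with
\begin{equation*}
\int_0^\delta \|P(\tau)\|^2\, d\tau \;\geq\; \gamma \sum_{i=0}^p \|M_i^T(t_0)\xi\|^2 \;=\; \gamma\, \xi^T M(t_0) M^T(t_0)\, \xi \;\geq\; \gamma\mu.
\end{equation*}
Combined with the elementary inequality $\|y\|^2 \geq \tfrac12\|P\|^2 - \|R\|^2$ and the identity $\xi^T W(t_0,t_0+\delta)\xi = \int_0^\delta \|y(t_0+\tau)\|^2\, d\tau$, the desired lower bound on $W$ reduces to controlling the remainder.

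The hard part, and the place where the hypothesis "bounded system" is essential, is getting a uniform-in-$t_0$ estimate $\|R(\tau)\| \leq K\tau^{p+1}$. This propagates from uniform bounds on $A, C$ and sufficiently many of their derivatives through the recursion $M_{i+1} = \dot M_i + A^T M_i$ to a uniform bound on $M_{p+1}$, hence on $y^{(p+1)} = M_{p+1}^T x$, with $K$ independent of $t_0$ and of $\xi$ (here the unit-norm constraint on $\xi$ is used to bound $\|x(\cdot)\|$ on the fixed-length interval via $\|\Phi\|$). Integrating then yields
\begin{equation*}
\xi^T W(t_0, t_0+\delta)\xi \;\geq\; \tfrac{\gamma\mu}{2} - \tfrac{K^2 \delta^{2p+3}}{2p+3},
\end{equation*}
which is bounded below by a positive constant $\alpha_1$ once $\delta$ is chosen small enough, independently of $t_0$ and $\xi$. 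This gives the lower bound $W(t_0, t_0+\delta)\geq \alpha_1 I_n$; the companion upper bound $W \leq \alpha_2 I_n$ is immediate from the uniform boundedness of $C$ and of $\Phi(\cdot, t_0)$ on the fixed-length interval $[t_0, t_0+\delta]$, completing the plan.
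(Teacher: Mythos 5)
The paper offers no proof of this lemma to compare against: it is stated without proof, as a known sufficient condition for uniform complete observability imported from the Riccati-observer literature (Hamel--Samson type results), and the paper only ever uses it in the subsequent lemma where $M(t)M^T(t)=I_9$ is computed explicitly. Judged on its own merits, your argument is the standard proof of this result and is essentially correct: the induction $y^{(i)}=M_i^T x$ matches the recursion defining the $M_i$ exactly; the positive definiteness of the quadratic form $(a_0,\dots,a_p)\mapsto\int_0^\delta\bigl\|\sum_i a_i\tau^i/i!\bigr\|^2\,d\tau$ on $\R^{(p+1)m}$ converts the pointwise bound $M(t_0)M^T(t_0)\ge\mu I_n$ into a lower bound on the Gramian modulo the Taylor remainder; and the choice of small $\delta$ does close the argument, since a rescaling $\tau=\delta s$ shows $\gamma(\delta,p)\ge\gamma_1(p)\,\delta^{2p+1}$ while the remainder contributes only $O(\delta^{2p+3})$, so the lower bound $\tfrac{\gamma\mu}{2}-\tfrac{K^2\delta^{2p+3}}{2p+3}$ is indeed positive for $\delta$ small, uniformly in $t_0$ and $\xi$ --- a competition you assert but do not verify, and which is worth one line. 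The only substantive caveat is that your uniform remainder estimate $\|R(\tau)\|\le K\tau^{p+1}$ requires $M_{p+1}$ to exist and be uniformly bounded, i.e., one more order of differentiability and boundedness of $A$ and $C$ than the recursion in the statement literally supplies; this is exactly what the ``bounded system'' hypothesis is meant to carry in the source results, but you should state it explicitly rather than let $K$ presuppose it. With that made precise, the lower bound $W(t_0,t_0+\delta)\ge\alpha_1 I_n$ and the routine upper bound from boundedness of $C$ and of the transition matrix over a fixed-length window together give uniform complete observability as claimed.
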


Consider the continuous Riccati equation(hereafter abbreviated as CRE)
\begin{align}\label{eq_CRE}
    \dot{P} = A(t) P + P A^T(t) - P C^T Q(t) C P + V(t),
\end{align}
where $Q \in \R^{m \times m}$ and $V \in \R^{n \times n}$ are symmetric positive definite matrices.
Let $P(t) \in \R^{n \times n}$ be the solution of equation \eqref{eq_CRE} with $P(0) \in \R^{n \times n}$ a symmetric positive definite matrix.
\begin{lemma}\label{lem_kalman}
    If system \eqref{sys_lin} is uniformly completely observable then the solution $P(t)$ is well defined on $\R^+$.
    Also, there exist constants $0 < q_m \leq q_M < \infty$ such that $q_m I_n \leq P(t) \leq q_M I_n$.
\end{lemma}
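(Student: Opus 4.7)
The plan is to establish three claims in sequence: (i) $P(t)$ is well-defined on $\R^+$, (ii) $P(t)\le q_M I_n$ uniformly, and (iii) $P(t)\ge q_m I_n$ uniformly. Steps (i) and (ii) are driven by the UCO hypothesis via Lemma~\ref{lem_uco}, while (iii) also requires the standing CRE assumption that $V(t)\ge v_m I_n$ for some $v_m>0$.

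For well-definedness, I would first invoke standard Lipschitz theory to get a unique local solution, and then compare with the linear Lyapunov equation $\dot{P}_U=A(t)P_U+P_UA^T(t)+V(t)$. Since the Riccati correction $-PC^TQCP$ is nonpositive, a monotonicity argument for symmetric matrix ODEs yields $0<P(t)\le P_U(t)$ on the maximal interval, and boundedness of $A$ and $V$ ensures $P_U$ is globally defined, ruling out finite escape time. Positive definiteness is preserved because at any hypothetical first time $t^\star$ at which $P(t^\star)$ has a null eigenvector $v$, the CRE gives $v^T\dot{P}(t^\star)v = v^T V(t^\star) v > 0$, so the flow cannot exit the positive-definite cone.

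For the upper bound I would pass to the information matrix $\tilde{P}(t):=P(t)^{-1}$, which by differentiating $P\tilde{P}=I_n$ and substituting the CRE satisfies the dual equation
\begin{align*}
\dot{\tilde{P}} = -\tilde{P}A - A^T\tilde{P} + C^T Q(t) C - \tilde{P} V(t) \tilde{P}.
\end{align*}
Here the information-injection term $C^TQC$, together with the UCO hypothesis, will force $\tilde{P}$ to be uniformly bounded below by some $q_M^{-1} I_n$. Concretely, one translates the column-rank condition $M(t)M^T(t)\ge \mu I_n$ into a uniform positive-definite lower bound on the observability Grammian $\int_t^{t+\delta}\Phi_A(s,t)^T C^T Q(s) C\, \Phi_A(s,t)\,ds$ via iterated integration by parts, where $\Phi_A$ is the transition matrix of $\dot{x}=A(t)x$; inverting the resulting inequality gives $P(t)\le q_M I_n$. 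For the lower bound, I would instead use variation of constants for $P$ itself around the closed-loop generator $A_c(t):=A(t)-P(t)C^TQ(t)C$, which yields
\begin{align*}
P(t)\ge \int_{t-\delta}^{t}\Phi_{A_c}(t,s)\,V(s)\,\Phi_{A_c}(t,s)^T\,ds,
\end{align*}
and uniform boundedness of $A_c$ (guaranteed by step (ii)) together with $V\ge v_m I_n$ makes this integral uniformly positive definite.

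The main obstacle is the UCO-to-Grammian passage used in step (ii): converting the column-rank condition on $M(t)M^T(t)$ into a quantitative, time-uniform lower bound on the observability Grammian requires careful bookkeeping of iterated integrations by parts and implicit regularity on $A$ and $C$. Since this passage is a classical result in the Kalman-filtering literature, I would invoke the standard statement rather than reproduce the calculation, and spend the proof text instead on checking that the specific CRE under consideration (uniform positivity of $Q$ and $V$, uniform boundedness of $A$ and $C$) does indeed fit those hypotheses.
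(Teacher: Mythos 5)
The paper does not actually prove this lemma: it is stated without proof and used as an imported classical result (the standard boundedness theorem for the Kalman--Bucy/Riccati equation under uniform complete observability, as used in the Riccati-observer literature of Hamel and Samson). So there is no in-paper argument to compare against; what can be judged is whether your sketch is a correct rendering of the classical proof, and it essentially is: local existence plus comparison with the Lyapunov equation $\dot P_U = AP_U + P_UA^T + V$ to exclude finite escape, invariance of the positive-definite cone, the dual equation for $\tilde P = P^{-1}$ together with the observability Grammian bound for $P \le q_M I_n$, and the variation-of-constants representation of $P$ around $A_c = A - PC^TQC$ together with $V \ge v_m I_n$ for $P \ge q_m I_n$. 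Two remarks. First, the ``main obstacle'' you identify --- converting $M(t)M^T(t)\ge\mu I_n$ into a Grammian bound --- is not part of this lemma at all: the hypothesis here is uniform complete observability itself, which by definition already is the two-sided Grammian bound; the passage from the rank condition is the content of the separate Lemma~\ref{lem_uco}. Second, in step (ii) the lower bound on $\tilde P$ does not follow from the sign of the injection term $C^TQC$ alone, since the term $-\tilde P V\tilde P$ in the dual equation has the unfavourable sign for that direction; the classical argument handles this via the duality between observability of the primal and controllability of the dual (or via the optimal-cost interpretation of $P$). You explicitly defer to the classical statement at that point, which is acceptable, but the phrase ``will force $\tilde P$ to be uniformly bounded below'' glosses over the one genuinely delicate step of the proof.
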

Choose the observer equations as:
\begin{subequations}\label{sys_obs_var}
    \begin{align}
        \dot{\bar{R}} &= R(\Omega_m - \bar{b}_\Omega)_\times + k_1 (R - \bar{R}),\\
        \dot{\bar{b}}_\Omega &= k_2 \Pa(R^T \bar{R})_\vee,\\
        \dot{\bar{p}} &= \bar{v} + K_3 (p - \bar{p}),\\
        \dot{\bar{v}} &= g + R (a_m - \bar{b}_a) + K_4 (p - \bar{p}),\\
        \dot{\bar{b}}_a &= K_5 (p - \bar{p}).
    \end{align}
\end{subequations}
where $(\bar{R}, \bar{b}_\Omega, \bar{p}, \bar{v}, \bar{b}_a) \in \R^{3 \times 3} \times \R^3 \times \R^3 \times \R^3 \times \R^3$, $k_1, k_2 \in \R_{> 0}$ and $K_3, K_4, K_5 \in \R^{3 \times 3}$ are given by $\begin{bmatrix}K_3^T & K_4^T & K_5^T\end{bmatrix}^T = P(t) \begin{bmatrix}I_3 & 0_{3 \times 3}  & 0_{3 \times 3}\end{bmatrix}^T Q(t)$, where $P(t) \in \R^{9 \times 9}, t \geq 0$ is the solution of the CRE equation with $P(0) = I_9$ with $Q(t) > 0$ and $A(t)$ and $C$ defined as
\begin{align}
    A(t) = \begin{bmatrix}0_{3 \times 3} & I_3 & 0_{3 \times 3}\\ 0_{3 \times 3} & 0_{3 \times 3} & -R(t) \\ 0_{3 \times 3} & 0_{3 \times 3} & 0_{3 \times 3}\end{bmatrix}, \quad C = \begin{bmatrix}I_3 & 0_{3 \times 3} & 0_{3 \times 3}\end{bmatrix}.\label{eq_AC}
\end{align}

\begin{lemma}%<<<
    There exists a solution for the CRE given in equation \eqref{eq_CRE} for the above choice of $A(t)$ and $C$.
\end{lemma}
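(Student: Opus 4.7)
The plan is to reduce the claim to Lemma \ref{lem_kalman}, which guarantees existence and uniform boundedness of the CRE solution once the underlying linear time-varying system is bounded and uniformly completely observable. Boundedness of $A(t)$ is immediate because $R(t) \in \SO{3}$ is bounded, so the only real task is to verify the sufficient observability condition in Lemma \ref{lem_uco} for the pair $(A(t), C)$ defined in \eqref{eq_AC}.

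I would first compute the sequence $M_0$, $M_1$, $M_2$ from the recursion $M_0 = C^T$, $M_{i+1} = \dot{M}_i + A^T M_i$. The key observation is that $M_0$, which has $I_3$ in its top $3 \times 3$ block and zeros elsewhere, is constant, so $M_1 = A^T M_0$; the block structure of $A^T$ then places $I_3$ in the middle block of $M_1$. Since $M_1$ is also constant, $M_2 = A^T M_1$, which places $-R^T(t)$ in the bottom block. Each of the three $9 \times 3$ matrices therefore carries a single nonzero $3 \times 3$ block in a distinct row-block position.

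Taking $p = 2$ and forming the concatenation $M(t) = \begin{bmatrix} M_0 & M_1 & M_2 \end{bmatrix} \in \R^{9 \times 9}$, the identity $M(t) M^T(t) = \sum_{i=0}^{2} M_i(t) M_i^T(t)$ combined with the block positions above reduces $M(t) M^T(t)$ to the block-diagonal matrix with diagonal entries $I_3$, $I_3$, and $R^T(t) R(t)$. The $\SO{3}$ constraint makes the last block $I_3$, so $M(t) M^T(t) = I_9$ for every $t \geq 0$, and Lemma \ref{lem_uco} applies with $\mu = 1$. Lemma \ref{lem_kalman} then delivers the desired solution $P(t)$, with the bonus that it is automatically bounded above and below by positive multiples of $I_9$. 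The only delicate point is ensuring that $\mu$ is genuinely uniform in $t$, which causes no trouble here precisely because the $\SO{3}$ constraint eliminates all $t$-dependence from $M(t) M^T(t)$.
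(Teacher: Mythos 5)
Your proposal is correct and follows essentially the same route as the paper: compute $M_0$, $M_1$, $M_2$ from the recursion in Lemma \ref{lem_uco}, observe that $M(t)M^T(t) = I_9$ because $R^T R = I_3$, and invoke Lemma \ref{lem_kalman} for existence and boundedness of $P(t)$. Your added remarks on the block structure and the uniformity of $\mu$ are just a more explicit rendering of the same argument.
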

\begin{proof}
    Consider the linear time varying system \eqref{sys_lin}.
    From Lemma \ref{lem_uco}, with $A$ and $C$ defined as in \ref{eq_AC}, we have that
    \begin{align*}
        M_0(t) = \begin{bmatrix} I_3\\ 0_{3 \times 3}\\ 0_{3 \times 3} \end{bmatrix},\; M_1(t) = \begin{bmatrix}0_{3 \times 3}\\ I_3\\ 0_{3 \times 3}\end{bmatrix},\; M_2(t) = \begin{bmatrix}0_{3 \times 3}\\ 0_{3 \times 3}\\ -R^T\end{bmatrix}.
    \end{align*}
    Hence, $M(t) M^T(t) = I_9$, implying that the system \eqref{sys_lin} is uniformly completely observable. Hence, the solution $P(t)$ of the CRE given in equation \eqref{eq_CRE} starting at a symmetric positive definite matrix exists and is well defined with the chosen $A(t)$ and $C$ from Lemma \ref{lem_kalman}.
\end{proof}
%>>>

Define the observer error terms as in equation \eqref{eq_obs_errors}.
The following theorem proposes a variable gain observer.

\begin{figure*}[!b]
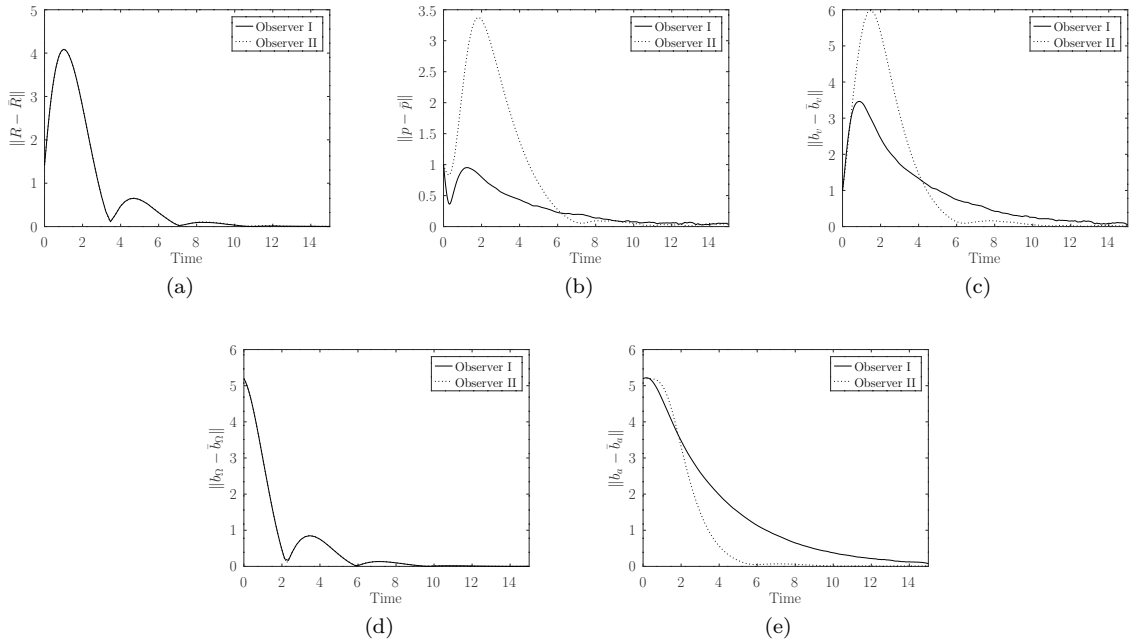

    \centering
    \subfloat[][]{\resizebox{0.3\linewidth}{!}{\input{Figures/continuous_SE3_noise_1.tex}}}\quad
    \subfloat[][]{\resizebox{0.3\linewidth}{!}{\input{Figures/continuous_SE3_noise_2.tex}}}\quad
    \subfloat[][]{\resizebox{0.3\linewidth}{!}{\input{Figures/continuous_SE3_noise_3.tex}}}\\
    \subfloat[][]{\resizebox{0.3\linewidth}{!}{\input{Figures/continuous_SE3_noise_4.tex}}}\quad
    \subfloat[][]{\resizebox{0.3\linewidth}{!}{\input{Figures/continuous_SE3_noise_5.tex}}}
    \caption{Simulation of observer in presence of noise: observer I: constant gain observer, observer II: variable gain observer}
    \label{fig_noise}
\end{figure*}

\begin{theorem}%<<<
    \label{theo_variable}
    With $Q$ and $V$ defined as constant symmetric positive definite matrices in the CRE equation \eqref{eq_CRE}, the observer system \eqref{sys_obs_var} converges to the system \eqref{sys_main} exponentially fast under Assumption \ref{assum_ang_vel} for all $(\bar{R}(0), \bar{b}_\Omega(0), \bar{p}(0), \bar{v}(0), \bar{b}_a(0)) \in \R^{3 \times 3} \times \R^3 \times \R^3 \times \R^3 \times \R^3$.
\end{theorem}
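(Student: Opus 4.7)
The plan is to decompose the error system \eqref{eq_obs_errors} into two decoupled parts. Since the $\bar{R}$ and $\bar{b}_\Omega$ updates in \eqref{sys_obs_var} are identical to those in \eqref{sys_obs_const}, the rotational error dynamics $(E_R, e_\Omega)$ coincide with those analyzed in the proof of Theorem \ref{theo_const}. The Barbalat-based argument used there applies verbatim, yielding global exponential convergence of $(E_R, e_\Omega) \to 0$. Only the translational/bias block $x = (e_p, e_v, e_a)$ requires new analysis.

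Differentiating the error terms along \eqref{sys_main} and \eqref{sys_obs_var} gives $\dot{e}_p = e_v - K_3 e_p$, $\dot{e}_v = -R e_a - K_4 e_p$, $\dot{e}_a = -K_5 e_p$, which can be assembled as $\dot{x} = (A(t) - K(t) C)\, x$ with $A(t)$, $C$ as in \eqref{eq_AC} and $K(t) = \begin{bmatrix} K_3^T & K_4^T & K_5^T \end{bmatrix}^T = P(t) C^T Q$. This is precisely the error dynamics of a Riccati observer for the linear time-varying system \eqref{sys_lin}.

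Next, I would exploit the already-established properties of $P(t)$. Uniform complete observability of \eqref{sys_lin} has been verified via Lemma \ref{lem_uco} (with $M(t) M(t)^T = I_9$), and Lemma \ref{lem_kalman} consequently guarantees constants $0 < q_m \le q_M < \infty$ with $q_m I_9 \le P(t) \le q_M I_9$ for all $t \ge 0$. Take as Lyapunov candidate $V_2(t, x) = x^T P^{-1}(t) x$, so that $\|x\|^2 / q_M \le V_2 \le \|x\|^2 / q_m$. Differentiating $V_2$ and substituting \eqref{eq_CRE} to eliminate $\dot{P}$, and using the identities $P^{-1} K = C^T Q$ and $K^T P^{-1} = Q C$ that follow from $K = P C^T Q$, the cross terms cancel and one obtains
\begin{align*}
    \dot{V}_2 = -x^T \left( C^T Q C + P^{-1} V P^{-1} \right) x.
\end{align*}
Since $V > 0$ and $P(t)$ is uniformly bounded, $P^{-1} V P^{-1} \ge (\lambda_{min}(V)/q_M^2) I_9$, so $\dot{V}_2 \le -\alpha \|x\|^2 \le -\alpha q_m V_2$ for some $\alpha > 0$. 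Hence $V_2(t)$, and therefore $\|x(t)\|$, decays exponentially.

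Combining the exponential decay of $(E_R, e_\Omega)$ with that of $x$ yields global exponential convergence of the full observer \eqref{sys_obs_var} to the state of \eqref{sys_main}. The main technical step I expect is the cancellation inside $\dot{V}_2$: inserting the CRE into $\tfrac{d}{dt} P^{-1} = -P^{-1} \dot{P} P^{-1}$ produces four terms, three of which cancel through the Kalman-gain identity $K = P C^T Q$, leaving the negative-definite expression above. Everything else is routine bookkeeping on the two independent subsystems, and the uniform bounds on $P(t)$ supplied by Lemma \ref{lem_kalman} are what ultimately promote the decay of $V_2$ into exponential decay of $\|x\|$.
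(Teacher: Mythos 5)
Your proposal matches the paper's own proof essentially step for step: the same decoupling into the $(E_R,e_\Omega)$ subsystem (handled by reference to Theorem \ref{theo_const}) and the $x=(e_p,e_v,e_a)$ subsystem, the same Lyapunov function $V_2 = x^T P^{-1} x$ with the bounds $q_m I_9 \le P(t) \le q_M I_9$ from Lemma \ref{lem_kalman}, the same cancellation via the CRE yielding $\dot{V}_2 = -x^T(C^T Q C + P^{-1} V P^{-1})x$, and the same conclusion via the comparison functions $\|x\|^2/q_M$, $\|x\|^2/q_m$, and $\lambda_{min}(V)\|x\|^2/q_M^2$. The argument is correct and there is nothing to add.
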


\begin{proof}
    Differentiating the errors, the error system is
    \begin{subequations}\label{sys_err_var}
    \begin{align}
        \dot{E}_R &= -R e_{\Omega_\times} - k_1 E_R,\label{sys_err_Er_var}\\
        \dot{e}_\Omega &= k_2 \Pa(R^T E_R)_\vee,\\
        \dot{e}_p &= e_v - K_3 e_p,\\
        \dot{e}_v &= -R e_a - K_4 e_p,\\
        \dot{e}_a &= -K_5 e_p.
    \end{align}
    \end{subequations}
    Define $x = (e_p, e_v, e_a)$.
    Then the update equation for $x$ can be written as
    \begin{align*}
        \dot{x} = A(t) x - P(t) C^T Q C x,
    \end{align*}
    with $A(t)$ and $C$ as defined in equation \eqref{eq_AC}.

    Since the dynamics of $R$-$\Omega$ system are independent of the $x$ system, consider the two Lyapunov functions
    \begin{align*}
        V_1 &= \frac{k_2}{2} \dotpr{E_R}{E_R} + \dotpr{e_\Omega}{e_\Omega},\\
        V_2 &= x^T P^{-1} x.
    \end{align*}
    The exponential convergence of the $E_R$-$e_\Omega$ system follows similar to that of Theorem \ref{theo_const}.

    Note that $P(t)$ is bounded above by $q_M I_9$ and below by $q_m I_9$.
    Defining
    \begin{align}
        W_1(x) &= \| x \|^2/q_M\label{eq_W1_var},\\
        W_2(x) &= \| x \|^2/q_m\label{eq_W2_var},
    \end{align}
    we have that $W_1(x) \leq V_2(t, x) \leq W_2(x)$.
    Differentiating $V_2$ along the trajectory of the system \eqref{sys_err_var},
    \begin{align*}
        \dot{V}_2 &= x^T ((A - P C^T Q C)^T P^{-1} + P^{-1} (A - P C^T Q C) - P^{-1} \dot{P} P^{-1}) x \\&= x^T (A^T P^{-1} + P^{-1} A - 2 C^T Q C - P^{-1} (A P + P A^T - P C^T Q C P + V) P^{-1}) x\\
        &= - x^T (C^T Q C + P^{-1} V P^{-1}) x
    \end{align*}
    Note that $C^T Q C \geq 0$ since $Q > 0$, and $P^{-1} V P^{-1} > 0$ since $P, V > 0$.
    Defining 
    \begin{align*}
        W_3(x) = \frac{\lambda_{min}(V) \| x \|^2}{q_M^2},% \leq \lambda_{min}(V) \| P^{-1} x \|^2,
    \end{align*}
    we see that $W_3(x) \leq \lambda_{min}(V) \| P(t)^{-1} x \|^2$ since $1/q_M$ is the smallest singular value of $P^{-1}$ due to symmetricity of $P$.
    Hence, $W_3(x) \leq - \dot{V}_2(t, x)$, i.e. $\dot{V}_2(t, x) \leq -W_3(x)$.
    Since $W_1(x) \leq V_2(t, x) \leq W_2(x)$ from equations \eqref{eq_W1_var} and \eqref{eq_W2_var}, the system is exponentially stable (Theorem 4.10 \cite{Khali2002}).
    Hence, the proposed variable gain observer is globally exponentially convergent.

\end{proof}
%>>>

%>>>

%>>>

\begin{figure*}[!t]
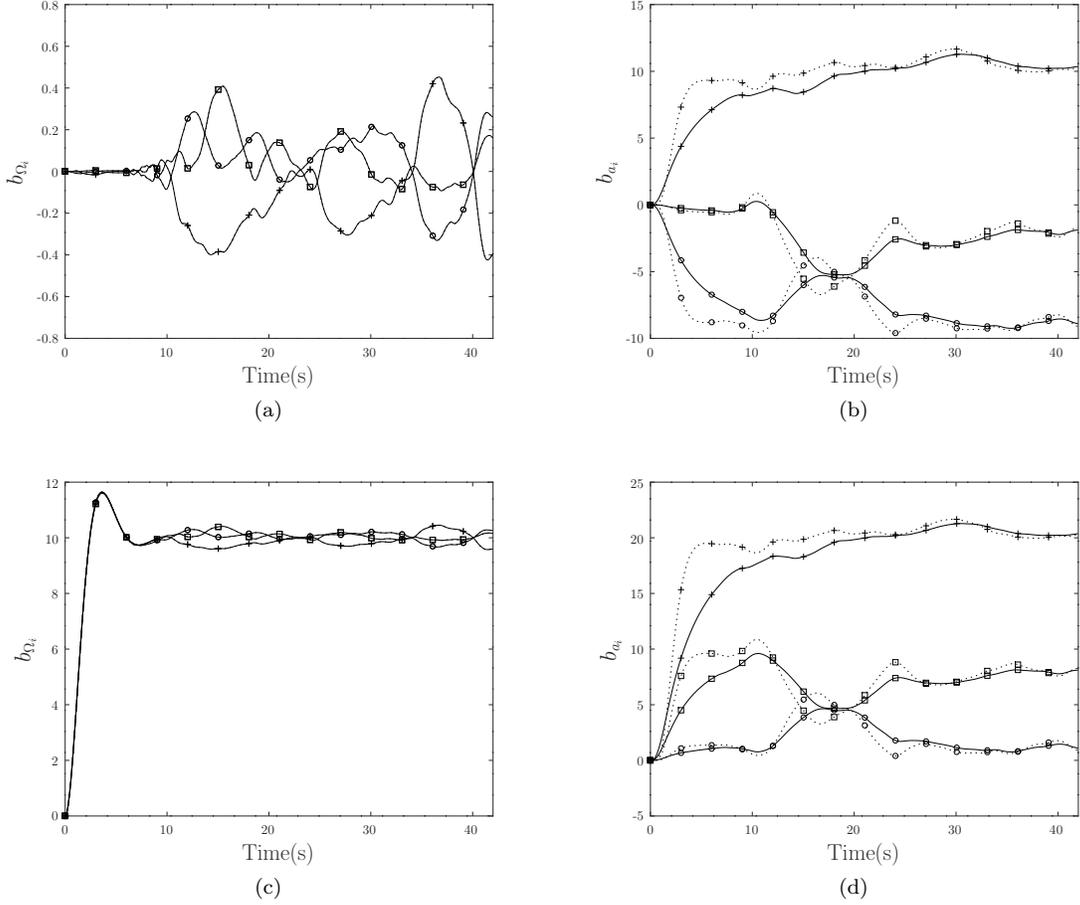

    \centering
    \subfloat[][]{\resizebox{0.45\linewidth}{!}{\input{Figures/exp_bias_1.tex}}}\quad
    \subfloat[][]{\resizebox{0.45\linewidth}{!}{\input{Figures/exp_bias_2.tex}}}\\
    \subfloat[][]{\resizebox{0.45\linewidth}{!}{\input{Figures/exp_bias_3.tex}}}\quad
    \subfloat[][]{\resizebox{0.45\linewidth}{!}{\input{Figures/exp_bias_4.tex}}}
    \caption{Experimental simulation of observer with data collected from a Intel Realsense T265: observer I: constant gain observer (solid line), observer II: variable gain observer (dashed line). Legend: square: $b_{(\Omega, a)_x}$, plus: $b_{(\Omega, a)_y}$, circle: $b_{(\Omega, a)_z}$. Figures (a) and (b) correspond to no added bias, Figures (c) and (d) correspond to added bias of 10 units.}
    \label{fig_exp}
\end{figure*}

\section{Numerical simulation}%<<<

To simulate the observers, the system is initialised as $(R(0), p(0), v(0)) = (\exp(-\pi e_{3_\times}/3), 0_{3}, 0_{3})$.
The true angular velocity and linear acceleration of the system are chosen as $(-\sin(10 t), \cos(10 t), 0.6\sin(5 t))$ and $(\cos(0.5 t), \sin(0.5 t), \cos(t))$, respectively.
The measurement of these values are corrupted with Gaussian noise of amplitude $0.01$ and constant biases $b_\Omega = (-1, 1, 5)$ and $b_a = (1, -5, 1)$, respectively.
For the measurement of the pose, we assume availability of $l$ markers (expressed as homogeneous vectors) $b = [b_1, b_2, \ldots, b_l] \in \R^{4 \times l}$ in the inertial frame and their corresponding measurements in the body frame $r = [r_1, r_2, \ldots, r_l] \in \R^{4 \times l}$, which are corrupted by Gaussian noise of amplitude $0.01$.
The measurement of the state of the system is arrived at using the landmark measurements as
\begin{align*}
    \begin{bmatrix}
        R_m & p_m \\ 0 & 1
    \end{bmatrix} = \pi_{\SE{3}}(r b^{\dagger}),
\end{align*}
where $\pi_{\SE{3}}(A)$ represents projection of $A \in \R^{4 \times 4}$ onto the special Euclidean group of three dimensions, and $b^{\dagger}$ represents the pseudo-inverse of $b$.

The observer state $(\bar{R}(0), \bar{p}(0), \bar{v}(0), \bar{b}_\Omega(0), \bar{b}_a(0)) = (I_3, 0_{3}, 0_{3}, 0_{3}, 0_{3})$ is chosen as the initial state of both the observers.
The simulation is run for 15 seconds.
We choose $k_1 = 1, k_2 = 1, k_3 = 3.4, k_4 = 5.5$ and $k_5 = 1.3$ for the constant gain observer and $k_1 = 1, k_2 = 1, P(0) = I_9, V = 0.1 I_9$ and $Q = I_3$ for the variable gain observer.
The results of the simulation is shown in Figure \ref{fig_noise}.
From the simulation, it can be seen that the observers converge to the system state fairly quickly.
Moreover, the bias value is estimated correctly even in the presence of noise in the system.
%>>>

\begin{table*}[!t]
    \centering
    \caption{Estimated Biases on data collected using Intel Realsense T265\label{tab_bias}. Observer I: constant gain observer, observer II: variable gain observer}

    \begin{tabular}{l|c|c|c|c}
        & \multicolumn{2}{c|}{Observer I} & \multicolumn{2}{c}{Observer II}\\
        \cline{2-5}
        & $b_{\Omega}$ & $b_a$ & $b_{\Omega}$ & $b_a$\\
        \hline
        No added bias & (0.05, -0.08, 0.03) & (-2.28, 9.58, -7.69) & (0.05, -0.08, 0.03) & (-2.29, 9.72, -7.76)\\
        \hline
        Added bias of 10 units & (9.90, 9.79, 9.88) & (7.26, 19.11, 1.85) & (9.9, 9.79, 9.88) & (7.29, 19.3, 1.82)\\
    \end{tabular}
\end{table*}

\section{Experiments}%<<<
We perform experimental simulations to check the feasibility of the observer to estimate the bias in the angular velocity and the linear acceleration.
We collect data of the pose using an Intel Realsense sensor.
We use an IMU for the measurement of the angular velocity of the system, and an accelerometer for the measurement of the linear acceleration.
We use the proposed observers for the estimation of the bias in the measurements.
Since the noise in the measurement devices may be comparable to the bias, we also run the observers with the same data by adding a large(as compared to the noise) constant bias.
In this experiment, we add $(10, 10, 10)$ to the measurements of the linear acceleration and angular velocity.
The convergence of the bias to a constant value ensures that the bias has been accurately estimated.

The gains are chosen as $k_1 = 1, k_2 = 1, k_3 = 3.4, k_4 = 5.5$ and $k_5 = 1.3$ for the constant gain observer and as $k_1 = 1, k_2 = 1, P(0) = I_9, V = 0.1 I_9$, and $Q = I_3$ for the variable gain observer.
The simulation results can be seen in Figure \ref{fig_exp}.
As expected, the biases converge to a constant value.
The estimated biases are shown in Table \ref{tab_bias}.
We see that the observer estimates the estimated biases well and converges to an almost constant value fairly quickly.
We also see that the two experiments arrive at a similar value of bias (after removing the added bias of 10 units), and hence are consistent.
The final estimate in the bias(taken as mean of the four biases arrived at) in the angular velocity is $(-0.025, -0.145, -0.045)$ and in the linear acceleration is $(-2.505, 9.428, -7.945)$.
%>>>

\section{Conclusion}%<<<
We have designed two observers which show globally exponentially stable convergence to the rigid body rotation and translation system under assumptions of boundedness of angular velocity.
These observers estimate the unknown bias in the linear acceleration and angular velocity measurements exponentially fast.
The first observer assumes knowledge of bounds of the angular velocity, and proposes a constant gain observer.
The second observer assumes no such knowledge, but provides a Riccati like observer, which is computationally expensive as compared to the first one.
A possible future research topic would be proposing a constant gain observer without the knowledge of the bounds of angular velocity.
An observer for discrete measurements with a continuous time system can also be designed by using the results of the proposed observer.
Also, another possible future research topic would be using the ambient space extension technique to propose a machine learning based observer for the rigid body system.
%>>>

\bibliographystyle{plain}
\bibliography{References}

\begin{thebibliography}{10}

\bibitem{BarraB2017}
Axel Barrau and Silv{\`e}re Bonnabel.
\newblock The invariant extended {{Kalman}} filter as a stable observer.
\newblock {\em IEEE Transactions on Automatic Control}, 62(4):1797--1812, April
  2017.

\bibitem{BashiHASW2017}
Omar Ibrahim~Dallal Bashi, Wan Zuha~Wan Hasan, Norhafiz Azis, Suhaidi Shafie,
  and Hiroaki Wagatsuma.
\newblock Unmanned aerial vehicle quadcopter: A review.
\newblock {\em Journal of Computational and Theoretical Nanoscience},
  14(12):5663--5675, December 2017.

\bibitem{BatisSO2011}
Pedro Batista, Carlos Silvestre, and Paulo Oliveira.
\newblock On the observability of linear motion quantities in navigation
  systems.
\newblock {\em Systems \& Control Letters}, 60(2):101--110, February 2011.

\bibitem{BhatB1998}
Sanjay~P. Bhat and Dennis~S. Bernstein.
\newblock A topological obstruction to global asymptotic stabilization of
  rotational motion and the unwinding phenomenon.
\newblock In {\em Proceedings of the 1998 American Control Conference. {{ACC}}
  ({{IEEE}} Cat. {{No}}.{{98CH36207}})}, volume~5, pages 2785--2789,
  {Philadelphia, PA, USA}, June 1998. {IEEE}.

\bibitem{BrasISSO2016}
S{\'e}rgio Br{\'a}s, Maziar Izadi, Carlos Silvestre, Amit Sanyal, and Paulo
  Oliveira.
\newblock Nonlinear observer for {{3D}} rigid body motion estimation using
  {{Doppler}} measurements.
\newblock {\em IEEE Transactions on Automatic Control}, 61(11):3580--3585,
  November 2016.

\bibitem{Chang2021}
Dong~Eui Chang.
\newblock Globally exponentially convergent continuous observers for velocity
  bias and state for invariant kinematic systems on matrix {{Lie}} groups.
\newblock {\em IEEE Transactions on Automatic Control}, 66(7):3363--3369, July
  2021.

\bibitem{deMaHHS2020}
Simone {de Marco}, Minh-Duc Hua, Tarek Hamel, and Claude Samson.
\newblock Position, velocity, attitude and accelerometer-bias estimation from
  {{IMU}} and bearing measurements.
\newblock In {\em 2020 {{European Control Conference}} ({{ECC}})}, pages
  1003--1008, {Saint Petersburg, Russia}, May 2020. {IEEE}.

\bibitem{Farre2008}
Jay~A. Farrell.
\newblock {\em Aided {{Navigation}}: {{GPS}} with {{High Rate Sensors}}}.
\newblock {McGraw Hill Professional}, April 2008.

\bibitem{Hashi2021}
Hashim~A. Hashim.
\newblock {{GPS-denied}} navigation: Attitude, position, linear velocity, and
  gravity estimation with nonlinear stochastic observer.
\newblock In {\em 2021 {{American Control Conference}} ({{ACC}})}, pages
  1149--1154, May 2021.

\bibitem{HuaA2018}
Minh-Duc Hua and Guillaume Allibert.
\newblock Riccati observer design for pose, linear velocity and gravity
  direction estimation using landmark position and {{IMU}} measurements.
\newblock In {\em 2018 {{IEEE Conference}} on {{Control Technology}} and
  {{Applications}} ({{CCTA}})}, pages 1313--1318, August 2018.

\bibitem{Khali2002}
Hassan Khalil.
\newblock {\em Nonlinear Systems}.
\newblock {Prentice Hall}, {Upper Saddle River, New Jersey}, 2002.

\bibitem{ParkPKC2021}
Jae~Hyeon Park, Karmvir~Singh Phogat, Whimin Kim, and Dong~Eui Chang.
\newblock Transversely stable extended {{Kalman}} filters for systems on
  manifolds in {{Euclidean}} spaces.
\newblock {\em Journal of Dynamic Systems, Measurement, and Control}, 143(6),
  February 2021.

\bibitem{ShanbC2022a}
Soham Shanbhag and Dong~Eui Chang.
\newblock Globally exponentially convergent observer for the rigid body system
  on {{SE}}(3).
\newblock In {\em 2022 {{IEEE}} 61st {{Conference}} on {{Decision}} and
  {{Control}} ({{CDC}})}, pages 1257--1262, December 2022.

\bibitem{VascoCSO2007}
Jos{\'e}~Fernandes Vasconcelos, Rita Cunha, Carlos Silvestre, and Paulo
  Oliveira.
\newblock Landmark based nonlinear observer for rigid body attitude and
  position estimation.
\newblock In {\em 2007 46th {{IEEE Conference}} on {{Decision}} and
  {{Control}}}, pages 1033--1038, December 2007.

\bibitem{VascoCSO2010}
Jos{\'e}~Fernandes Vasconcelos, Rita Cunha, Carlos Silvestre, and Paulo
  Oliveira.
\newblock A nonlinear position and attitude observer on {{SE}}(3) using
  landmark measurements.
\newblock {\em Systems \& Control Letters}, 59(3):155--166, March 2010.

\bibitem{WangT2020a}
Miaomiao Wang and Abdelhamid Tayebi.
\newblock Hybrid nonlinear observers for inertial navigation using landmark
  measurements.
\newblock {\em IEEE Transactions on Automatic Control}, 65(12):5173--5188,
  December 2020.

\end{thebibliography}

\end{document}